\newcommand{\be}{\begin{equation}}
\newcommand{\en}{\end{equation}}
\newcommand{\bea}{\begin{eqnarray}}
\newcommand{\ena}{\end{eqnarray}}
\newcommand{\beano}{\begin{eqnarray*}}
\newcommand{\enano}{\end{eqnarray*}}
\newcommand{\bee}{\begin{enumerate}}
\newcommand{\ene}{\end{enumerate}}
\newcommand{\ad}{^{\mbox{\scriptsize $\dag$}}}
\newcommand{\mult}{\,{\scriptstyle \square}\,}
\newcommand{\mc}{\mathcal}
\newcommand{\mb}{\mathbb}
\newcommand{\N}{\mathbb N}
\newcommand{\w}{{\rm w}}
\def\H{{\mathcal H}}
\newcommand{\Hil}{{\mathcal H}}
\newcommand{\Id}{1\!\!1}
\def\L{{\mathcal L}}
\newcommand{\Lc}{{\mathcal L}}
\newcommand{\LL}{{\mathcal L}}
\newcommand{\D}{{\mathcal D}}
\newcommand{\E}{{\mathcal E}}
\newcommand{\M}{{\mathfrak M}}
\newcommand{\A}{{\mathfrak A}}
\newcommand{\Ao}{{\mathfrak A}_0}
\newcommand{\up}{\upharpoonright}
\newcommand{\restr}[1]{\!\up\!{#1}}
\newtheorem{thm}{Theorem}[section]
\newtheorem{cor}[thm]{Corollary}
\newtheorem{lemma}[thm]{Lemma}
\newtheorem{rem}[thm]{Remark}
\newcommand{\berem}{\begin{rem}$\!\!${\bf }$\;$\rm }
\newcommand{\enrem}{ \end{rem}}
\newcommand{\ip}[2]{\langle {#1}|{#2}\rangle}
\newcommand{\LD}{{\L}\ad(\D)}
\newcommand{\LDH}{{\L}\ad(\D,\H)}
\newcommand{\BH}{{\mc B}(\H)}
\newcommand{\cu}{{\rm c}}
\newcommand{\bic}[1]{{#1}''_{\w\sigma}}
\newcommand{\bicc}[1]{{#1}''_{\w\cu}}
\newcommand{\wcom}{\M'_\w\D\subset \D}
\def\NG{{\mathfrak N}}
\begin{document}
\title[ ]
{Induced and reduced unbounded operator algebras}

\author{F. Bagarello}
\address{Dipartimento di Metodi e Modelli Matematici,
Fac. Ingegneria, Universit\`a di Palermo, I-90128  Palermo, Italy}
\email{bagarell@unipa.it}

\author{A. Inoue}
\address{
Department of Applied Mathematics, Fukuoka University, Fukuoka
814-0180, Japan}
\email{a-inoue@fukuoka-u.ac.jp}
\author{C. Trapani} \address{Dipartimento di Matematica ed
Applicazioni, Universit\`a di Palermo, I-90123 Palermo
Italy}
\email{
trapani@unipa.it}

\begin{abstract}
The induction and reduction precesses of an O*-vector space $\M$ obtained by means of a projection taken, respectively, in $\M$ itself or in its weak bounded commutant $\M'_\w$ are studied. In the case where $\M$ is a partial GW*-algebra, sufficient conditions are given for the induced and the reduced spaces to be partial GW*-algebras again.
\end{abstract}

\maketitle


\section{Introduction and preliminaries} Among the "elementary'' operations one can perform on a bounded operator algebra $\M$ there are the so-called processes of {\em induction} and  {\em reduction}: both of them are obtained via a projection chosen either in $\M$ or in its commutant $\M'$ \cite{dixmier}. If $\M$ is a von Neumann algebra then both procedures give rise once more to von Neumann algebras.

In a recent paper \cite{bit1} we made some steps toward the generalization of the reduction process to unbounded operator algebras (O*-algebras). In that case there is not a strict analog of a von Neumann algebra, but EW*-algebras and  GW*-algebras behave sufficiently well to be considered as natural extensions of that notion to O*-algebras. In particular, in \cite{bit1} we gave conditions for the reduction of a GW*-algebra to be again a GW*-algebra.

In this paper we make a step further and consider the induction and reduction processes in the more general case where $\M$ is an O*-vector space on a given domain $\D$ in Hilbert space $\H$ (we maintain, in this way, our study as much general as possible)  with a special attention to the case where $\M$ is a { partial} O*-algebra or a partial GW*-algebra.

In Section \ref{sect_2} we examine the induction procedure starting from an O*-vector space $\M$ and a projection $E\in \M$. The result of this analysis is a sufficient condition for the induced space $\M_E$ of a (partial) GW*-algebra $\M$ to be a (partial) GW*-algebra.

In Section \ref{sect_3} we consider the reduction procedure moving, this time, from an O*-vector space $\M$ and a projection $E\in \M'_\w$, the weak bounded commutant of $\M$. Also in this case the main outcome is a number of sufficient conditions for  the induced space $\M_E$ of a partial GW*-algebra $\M$ to be a partial GW*-algebra.

Finally, in Section \ref{sect_4}, we show how the results obtained in Section \ref{sect_3} can be used for analyzing the existence of conditional expectations for partial O*-algebras, applying some proposition proved by Takakura \cite{takakura}.

\medskip
For reader's  convenience, we  recall here briefly the definitions of (partial) O*-algebras, (partial) GW*-algebras and other basics. More details can be found in \cite{schmu,ait_book}.

Let  $\D$ be a dense subspace of a Hilbert space $\Hil$. We denote by $ \Lc^\dagger(\D,\Hil) $ the set of all
(closable) linear operators $X$ such that $ {\D}(X) = {\D},\; {\D}(X^*) \supseteq {\D}.$

The set $ \LL^\dagger(\D,\Hil ) $ is a  partial *-algebra
 with respect to the following operations: the usual sum $X_1 + X_2 $,
the scalar multiplication $\lambda X$, the involution $ X \mapsto X^\dagger = X^* \restr{\D}$ and the \emph{
(weak)} partial multiplication $X_1 \mult X_2 = {{X_1}^\dagger}^* X_2$, defined whenever $X_2$ is a weak right
multiplier of $X_1$ (we shall write $X_2 \in R^{\rm w}(X_1)$ or $X_1 \in L^{\rm w}(X_2)$), that is, iff $ X_2
{\D} \subset {\D}({{X_1}^\dagger}^*)$ and  $ X_1^* {\D} \subset {\D}(X_2^*).$

Given a subset $\M$ of $\LDH$, its universal right multipliers are the elements
of the set:
$$R^\w\M = \{Y \in \LDH; Y \in R^\w(X), \forall X \in  \M\}.$$

Let $\Lc^\dagger(\D)$ be the subspace of $\Lc^\dagger(\D,\Hil)$ consisting of all its elements  which leave, together with their adjoints, the domain $\D$ invariant. Then $\Lc^\dagger(\D)$ is a *-algebra with
respect to the usual operations.

A $\ad$-invariant subsbace $\M$ of $\LDH$ is called an O*-vector space in $\LDH$. In particular, if  $\M\subset\LD$, then $\M$ will be called an O*-vector space in $\LD $.

An O*-vector space $\M$ is called a  (weak) partial O*-algebra on $\D$ in $\H$ if it is stable under the weak multiplication $\mult$ (in the sense that $X,\,Y \in \M$ and $Y\in R^\w(X)$ imply $X\mult Y \in \M$).

Let $\M $ be an O*-vector space in $\LDH$. The {\em graph topology} $t_\M $ on $\D$ is the locally convex topology
defined by the family $\{\|\cdot\|,  \|\cdot\|_X;\, X \in \M \}$ of seminorms: $\|\xi\|_X= \|X\xi\|$, $\xi \in \D$. If the
locally convex space $\D[t_\M ]$ is complete, then $\M $ is said to be {\em closed}. More in general, we denote
by $\widetilde{\D}(\M )$ the completion of the locally convex space $\D[t_\M ]$ and put
$$ \widetilde{X}:= \overline{X}\restr{\widetilde{\D}(\M ) }\quad\mbox{ and } \widetilde{\M }:=\{\widetilde{X}: X \in
\M \}.$$ Then $\widetilde{\M }$ is a closed
O*-vector space on $\widetilde{\D}(\M )$ which is called the {\em closure} of $\M $, since it is the smallest closed
extension of $\M $.

If $\M$ is an O*-vector space, we also put $\widehat{\D}(\M ) = \bigcap_{X\in
\M }D(\overline{X})$ and
$$ \widehat{X}:= \overline{X}\restr{\widehat{\D}(\M ) } \quad\mbox{ and } \widehat{\M }:=\{\widehat{X}: X \in
\M \}.$$
Then $\widehat{\M }$ is an O*-vector space on $\widehat{\D}(\M )$, called the {\em full closure} of $\M$.
If $\D=\widehat{\D}(\M )$ and, consequently, $\M=\widehat{\M }$, $\M$ is said to be {\em fully closed} . If $\M$ is an O*-algebra, the notions of closure and full closure coincide.

Let again $\M$ be an O*-vector space and let $\D^*(\M ) := \bigcap_{X\in
\M }D(X^*)$.  If $\D=\D^*(\M )$ then $\M $ is said to be {\em self-adjoint}.  It is clear that
\begin{align*}
& \D\subset \widetilde{\D}(\M ) \subset \widehat{\D}(\M ) \subset \D^*(\M )\\
& X\subset \widetilde{X} \subset \widehat{X}\subset {X\ad}^*, \quad \forall X \in \M .
\end{align*}
The {\em weak commutant} $\M '_\w$ of $\M $ is defined by
$$\M '_\w=\{C\in \BH: \ip{CX\xi}{\eta}=\ip{C\xi}{X\ad\eta}, \; \forall X\in \M ,\,\forall \xi, \eta\in \D \},$$
where $\BH$ is the *-algebra of all bounded linear operators on $\H$. Then $\M '_\w$ is a weak-operator closed
*-invariant subspace of $\BH$ (but it is not, in general, a von Neumann algebra) and $(\widetilde{\M })'_\w
=\M '_\w$. If $\M '_\w\D\subset \D$, as it happens for self-adjoint $\M $,  then $\M '_\w$ is a von Neumann
algebra.

In this paper we will also use unbounded commutants and bicommutants of $\M $, whose definitions we shortly recall.

The {\em weak unbounded commutant} $\M '_\sigma$ of $\M $ is defined by
$$\M '_\sigma=\{Y\in \LDH: \ip{X\xi}{Y\ad\eta}=\ip{Y\xi}{X\ad\eta}, \; \forall X\in \M ,\,\forall \xi, \eta\in \D \}.$$

We put $\M '_\cu := \M '_\sigma \cap \LD$.
If $\M$ is an O*-vector space, so is also $\M '_\sigma$.
As for the bicommutants, the bounded one
$$(\M '_\w)' =\{A \in \BH: \, AX=XA,\, \forall X \in \M '_\w\}$$
is a von Neumann algebra on $\H$. The unbounded bicommutant is defined as
$$\bic{\M }:=\{X\in \LDH:\, \ip{CX\xi}{\eta}=\ip{C\xi}{X\ad\eta}, \;
\forall C\in \M '_\w,\,\forall \xi, \eta\in \D \}.$$ Then,
$\bic{\M }$ is a $\tau_{s^*}$-closed partial O*-algebra on $\D$ such that $\M  \subset \bic{\M }$,
where the strong*-topology $\tau_{s^*}$ is defined by the family $\{p_\xi^*(\cdot); \xi \in \D\}$ of
seminorms:
$$p_\xi^*(X):=\|X\xi\|+\|X\ad\xi\|,\quad X \in \LDH.$$
Furthermore, if $\M '_\w\D\subset \D$, then $\bic{\M}$ is a $\tau_{s^*}$-closed partial O*-algebra on $\D$ and
$$\bic{\M } =\{X \in \LDH:\, \overline{X} \mbox{ is affiliated with }(\M '_\w)'\}=
\overline{(\M '_\w)'\restr{\D}}^{\tau_{s^*}}. $$
Similarly, $\bicc{\M}$ is a $\tau_{s^*}$-closed O*-algebra on $\D$. One has $(\bicc{\M})'_\w = \M '_\w$ and
$$\bicc{\M } =\{X \in \LD:\, \overline{X} \mbox{ is affiliated with }(\M '_\w)'\}=
\overline{(\M '_\w)'\restr{\D}}^{\tau_{s^*}}\cap \LD. $$

A fully closed partial O*-algebra $\M$ on $\D$ is called a {\em partial
GW*-algebra} if $\wcom$ and $\M = \bic{\M}$. A closed O*-algebra $\M$ on $\D$ is called
a {\em GW*-algebra} if $\wcom$   and $\M = \bicc{\M}$.

Finally, if $\M \subset \LDH$, we denote by $\M_b$ the {\em bounded part} of $\M$ (i.e., the subset of bounded operators of $\M$) and put  $\overline{\M_b}=\{\overline{X}; X \in \M_b\}$.   A fully closed (partial) O*-algebra $\M$ on $\D$ is said to be
a {\em (partial) EW*-algebra} if $\overline{\M_b}$ is a von Neumann algebra, $\M_b \D  \subset \D$ and $\overline{X} $is affiliated with $\overline{\M_b}$ for every $X\in \M$.

\section{Induced GW*-algebras and induced partial GW*-algebras} \label{sect_2}

Let $\M$ be a (fully closed) O*-vector space in $ \LL^\dagger(\D,\Hil ) $ and $E$ a projection in $\M$ such that $E\D\subset\D$. Then we put
$$
\M_E=\{EX\restr{E\D}:\,X\in\M\}.
$$
Then $\M_E$ is an O*-vector space in $ \LL^\dagger(E\D,E\Hil ) $, called the {\em induction} of $\M$. We can check that if $\M$ is a O*-vector space in $ \LL^\dagger(\D) $ then $\M_E$ is a O*-vector space in $ \LL^\dagger(E\D) $ and that if $\M$ is a (partial) O*-algebra on $ \D $ then $\M_E$ is a (partial) O*-algebra on $ E\D $.

\berem The full-closability of $\M$  does not automatically imply the full-closability of $\M_E$, and $(\M_E)'_\w$ is not necessarily a von Neumann algebra.
\enrem
In the framework considered here it is natural to answer to the following questions:

{\bf Question 1:} Let $\M$ be a partial GW*-algebra on $\D$. Is $\M_E$ a partial GW*-algebra on $E\D$?

{\bf Question 2:} Let $\M$ be a GW*-algebra on $\D$. Is $\M_E$ a GW*-algebra on $E\D$?

To answer these questions we first give the following

\begin{lemma}\label{lemma 2.2}
Let $\M$ be a O*-vector space in $ \LL^\dagger(\D,\Hil ) $ such that $\M'_\w\D\subset\D$ and $E$ a projection in $\M$ such that $E\D\subset\D$. Assume also that the closure of $\M_b$ with respect to the strong *-topology, $\overline{\M_b}^{s^*}$, coincides with $(\M_\w')'$. Then
$$
(\M_\w')_E=(\M_E)'_\w,\qquad (\bic{\M})_E=(\M_E)''_{\w\sigma},\qquad (\M_{\w c}'')_E=(\M_E)''_{\w c}.
$$

\end{lemma}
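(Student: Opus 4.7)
Since $E\in\M$ is bounded, $E\in\M_b\subset\overline{\M_b}^{s^*}=(\M'_\w)'$, so $E$ commutes with every $C\in\M'_\w$. Put $\A:=(\M'_\w)'=\overline{\M_b}^{s^*}$: this is a von Neumann algebra on $\H$ containing the projection $E$, and by the classical reduction theorem $\A_E:=E\A E\restr{E\H}$ is a von Neumann algebra on $E\H$ with $(\A_E)'=\A'\restr{E\H}=\M'_\w\restr{E\H}=(\M'_\w)_E$.

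For the first identity the plan is to establish the sandwich
$$
(\M'_\w)_E\,\subset\,(\M_E)'_\w\,\subset\,(\A_E)'=(\M'_\w)_E.
$$
The left inclusion is a direct matrix-element check: for $C\in\M'_\w$, $X\in\M$ and $\xi,\eta\in E\D$, using $CE=EC$, $E\xi=\xi$, $E\eta=\eta$ together with the weak commutation of $C$ with $X$, one verifies $\ip{C\cdot EX\xi}{\eta}=\ip{C\xi}{EX\ad\eta}$, so $C\restr{E\H}\in(\M_E)'_\w$. For the right inclusion, any $B\in(\M_E)'_\w$ commutes (as a bounded operator on $E\H$) with the family $(\M_b)_E:=\{EAE\restr{E\H}:A\in\M_b\}$; since the compression $T\mapsto ETE\restr{E\H}$ is continuous in the strong $*$-topology, the hypothesis $\overline{\M_b}^{s^*}=\A$ transports to $\overline{(\M_b)_E}^{s^*}=\A_E$. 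Hence $B$ commutes with $\A_E$ and $B\in(\A_E)'$.

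For the second identity I would invoke the characterization $\bic{\M}=\{X\in\LDH:\overline{X}\text{ affiliated with }\A\}$ (valid since $\wcom$, see the preliminaries). A short verification using $E\D\subset\D$ and $\wcom$ yields $(\M_E)'_\w\cdot E\D\subset E\D$, so identity~1 gives the analogous description $(\M_E)''_{\w\sigma}=\{Y\in\LL\ad(E\D,E\H):\overline{Y}\text{ affiliated with }\A_E\}$. Identity~2 then reduces to a bijection $X\mapsto EX\restr{E\D}$ between $\A$-affiliated elements of $\LDH$ and $\A_E$-affiliated elements of $\LL\ad(E\D,E\H)$: the forward direction is another weak-commutation check, while the inverse sends an $\A_E$-affiliated $Y$ to $X:=YE$ defined on $\D$, which lies in $\LDH$, reduces back to $Y$ under $E$, and (by the same style of computation) weakly commutes with $\M'_\w$, so $X\in\bic{\M}$. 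Identity~3 is then obtained from identity~2 by intersecting with $\LL\ad(E\D)$, noting that the bijection respects invariance of $\D$ under $X,X\ad$ and, correspondingly, invariance of $E\D$ under $Y,Y\ad$.

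The main technical step is the $s^*$-density $\overline{(\M_b)_E}^{s^*}=\A_E$ used in identity~1; it is precisely this consequence of the hypothesis $\overline{\M_b}^{s^*}=(\M'_\w)'$ that permits one to recover a global element of $\M'_\w$ from an operator that only a priori commutes weakly with $\M_E$ on $E\H$. The remaining content of the proof is routine bookkeeping, chaining weak-commutation identities and exploiting the von Neumann algebra structure of $\A$ and $\A_E$.
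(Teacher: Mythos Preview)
Your argument is correct and follows essentially the same route as the paper: for the first identity you use the density hypothesis $\overline{\M_b}^{s^*}=(\M'_\w)'$ together with the classical von Neumann induction formula $(\A_E)'=(\A')_E$, which is exactly the content of the paper's chain $(\M_E)'_\w=(((\M'_\w)')_E)'=((\M'_\w)'')_E=(\M'_\w)_E$, only spelled out in more detail. For the second and third identities the paper merely says they ``easily'' follow from the first, and your use of the affiliation description of $\bic{\M}$ (available since $\M'_\w\D\subset\D$) together with the explicit inverse $Y\mapsto YE$ is a natural and correct way to flesh this out.
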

\begin{proof}
Since $\M'_\w\D\subset\D$, then $\M_\w'$ is a von Neumann algebra. The assumption $\overline{\M_b}^{s^*}=(\M_\w')'$, in turn, implies that
  \begin{equation}\label{3.1}(\M_E)'_\w=(((\M_\w')')_E)'=((\M_\w')'')_E = (\M_\w')_E.
 \end{equation}
The equalities \eqref{3.1} easily imply that
$(\bic{\M})_E=\bic{(\M_E)}$ and $(\bicc{\M})_E=\bicc{(\M_E)}$.
\end{proof}
Every partial GW*-algebra satisfies the assumptions of Lemma \ref{lemma 2.2}. This is not necessarily true for a GW*-algebra. Hence we have
\begin{thm}
Let $\M$ be a  partial GW*-algebra on $\D$ and $E$ a projection in $\M$ such that $E\D \subset \D$. Then the full closure of $\M_E$ is a partial GW*-algebra on $\hat\D(\M_E)$.
\end{thm}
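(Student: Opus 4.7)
The plan is to reduce the problem to properties of $\M_E$ on the (not necessarily fully closed) domain $E\D$ via Lemma~\ref{lemma 2.2}, then pass to the full closure $\widehat{\M_E}$ on $\widehat\D(\M_E)$ and verify the three defining conditions of a partial GW*-algebra. Because $\M$ is a partial GW*-algebra, $\M'_\w\D\subset\D$, and the identity $\M=\bic{\M}$ combined with the affiliation characterization of $\bic{\M}$ recalled in the introduction gives $\overline{\M_b}^{\tau_{s^*}}=(\M'_\w)'$; hence Lemma~\ref{lemma 2.2} yields $(\M_E)'_\w=(\M'_\w)_E$ and $\bic{\M_E}=(\bic{\M})_E=\M_E$ on $E\D$. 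Since the projection $E\in\M$ satisfies $AE=EA$ for every $A\in\M'_\w$ (a direct consequence of the weak-commutant definition together with $E=E^*$), the set $(\M_E)'_\w$ is $\M'_\w E|_{E\H}$, a von Neumann algebra on $E\H$ leaving $E\D$ invariant (indeed $A\xi\in\D\cap E\H=E\D$ for $\xi\in E\D$ and $A\in\M'_\w$).

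Passing to the full closure, full closedness of $\widehat{\M_E}$ is automatic and a continuity argument gives $(\widehat{\M_E})'_\w=(\M_E)'_\w$. To show this commutant leaves $\widehat\D(\M_E)$ invariant, fix $A\in\M'_\w$, $Y=EX|_{E\D}\in\M_E$, $\eta\in\widehat\D(\M_E)$, and pick $\xi_n\in E\D$ with $\xi_n\to\eta$ and $Y\xi_n\to\overline{Y}\eta$. The identity $AX=XA$ on $\D$ (from $\M'_\w\D\subset\D$ and the weak-commutant relation) together with $AE=EA$ yields
$$Y(EA\xi_n)=EXA\xi_n=AEX\xi_n=AY\xi_n\longrightarrow A\,\overline Y\eta,$$
while $EA\xi_n\to EA\eta$ by boundedness. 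Hence $EA\eta\in D(\overline Y)$ for every $Y\in\M_E$, i.e.\ $EA\eta\in\widehat\D(\M_E)$.

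For the final identity $\widehat{\M_E}=\bic{\widehat{\M_E}}$, the previous step makes the characterization
$$\bic{\widehat{\M_E}}=\{Z\in\Lc^\dagger(\widehat\D(\M_E),E\H):\overline Z\text{ is affiliated with }((\M_E)'_\w)'\}$$
applicable. Since $E\in\bic{\M}$ is bounded, $E\in(\M'_\w)'$, so reduction of von Neumann algebras gives $((\M_E)'_\w)'=(\M'_\w)'E|_{E\H}$. Given $Z\in\bic{\widehat{\M_E}}$, the trivial extension $\widetilde Z:=\overline Z\oplus 0$ on $\H=E\H\oplus(1-E)\H$ is affiliated with $(\M'_\w)'$; using $\M=\bic{\M}$, this furnishes $X\in\M$ with $\overline X=\widetilde Z$. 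Setting $Y_0:=EX|_{E\D}\in\M_E$ one has $Z=Y_0$ on $E\D$, and since $\overline Z$ is a closed extension of $Y_0$ that necessarily contains the minimal closed extension $\overline{Y_0}$, a short closability argument identifies $\widehat{Y_0}$ with $Z$ on all of $\widehat\D(\M_E)$, giving $Z\in\widehat{\M_E}$.

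The main obstacle is this last identification: producing $X\in\M$ from $Z\in\bic{\widehat{\M_E}}$ via the trivial extension and propagating the equality $\widehat{Y_0}=Z$ from $E\D$ to the entire domain $\widehat\D(\M_E)$. The delicate ingredients are the closability of $Z$ built into $Z\in\Lc^\dagger(\widehat\D(\M_E),E\H)$ and the minimality of the graph closure $\overline{Y_0}$ among closed extensions of $Y_0$; both are required to force the two a priori different candidates to coincide on the full-closure domain.
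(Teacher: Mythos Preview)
Your approach is the paper's: verify the hypotheses of Lemma~\ref{lemma 2.2} (which hold because $\M$ is a partial GW*-algebra, so $\M'_\w\D\subset\D$ and $(\M'_\w)'\restr\D\subset\M_b$), deduce $(\M_E)'_\w=(\M'_\w)_E$ and $\M_E=\bic{(\M_E)}$ on $E\D$, and then pass to the full closure. The paper's own argument is literally one sentence and omits the passage to $\widehat{\M_E}$; you supply those details correctly.

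Your final step, however, is needlessly roundabout and contains a small overstatement. Given $Z\in\bic{\widehat{\M_E}}$, there is no need to build $\widetilde Z=\overline Z\oplus 0$ and invoke affiliation with $(\M'_\w)'$: just restrict $Z$ to $E\D$. Since $(\widehat{\M_E})'_\w=(\M_E)'_\w$, the restriction $Z\restr{E\D}$ already lies in $\bic{(\M_E)}=\M_E$, and then the very ``minimal closed extension'' argument you give shows $\widehat{Z\restr{E\D}}=Z$, hence $Z\in\widehat{\M_E}$. In your version the assertion ``this furnishes $X\in\M$ with $\overline X=\widetilde Z$'' is not justified: from $X:=\widetilde Z\restr\D\in\bic\M=\M$ one only gets $\overline X\subset\widetilde Z$, and the two closures need not coincide. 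Fortunately the remainder of your argument uses only $X\restr{E\D}=Z\restr{E\D}$, so the slip is harmless.
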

\begin{thm}\label{2.4}
If  $\M$ is a   GW*-algebra on $\D$  such that $\overline{\M_b}^{s^*}=(\M_\w')'$ and $E$ is a projection in $\M$, then the  closure of $\M_E$ is a  GW*-algebra on $\tilde\D(\M_E)$.
\end{thm}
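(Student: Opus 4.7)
The plan is to apply Lemma \ref{lemma 2.2} to obtain a bicommutant identity already at the level of $\M_E$, and then show that the GW*-algebra structure survives the passage to the closure.

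Since $\M$ is a GW*-algebra, $\M\subset\L\ad(\D)$, so $E\D\subset\D$ automatically and $\M_E$ is an O*-algebra on $E\D$. The hypotheses of Lemma \ref{lemma 2.2} are all met: $\M'_\w\D\subset\D$ by definition of a GW*-algebra, and $\overline{\M_b}^{s^*}=(\M'_\w)'$ by assumption. Applying Lemma \ref{lemma 2.2},
\[
(\M_E)'_\w=(\M'_\w)_E,\qquad \bicc{(\M_E)}=(\bicc{\M})_E=\M_E,
\]
the rightmost equality coming from $\M=\bicc{\M}$. Because $E\in\M$, each $C\in\M'_\w$ commutes with $E$, so $(\M'_\w)_E=\{C\restr{E\H}:C\in\M'_\w\}$ is a von Neumann algebra on $E\H$, and the invariance $(\M_E)'_\w E\D\subset E\D$ follows from $\M'_\w\D\subset\D$.

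Next I would consider the closure $\widetilde{\M_E}$ on $\widetilde{\D}(\M_E)$, which is a closed O*-algebra with $(\widetilde{\M_E})'_\w=(\M_E)'_\w$. To promote the invariance to $(\widetilde{\M_E})'_\w\widetilde{\D}(\M_E)\subset\widetilde{\D}(\M_E)$, I would proceed by continuity: given $C\in(\M_E)'_\w$ and $\xi\in\widetilde{\D}(\M_E)$, pick a net $(\xi_n)\subset E\D$ converging to $\xi$ in the graph topology; since $C$ is bounded and $YC\xi_n=CY\xi_n$ for every $Y\in\M_E$, the net $C\xi_n$ is itself graph-Cauchy, so by completeness of $\widetilde{\D}(\M_E)$ the vector $C\xi$ lies in $\widetilde{\D}(\M_E)$. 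The remaining and most delicate step is the bicommutant equality $\widetilde{\M_E}=\bicc{(\widetilde{\M_E})}$. The inclusion $\subset$ is immediate. For the reverse, I would invoke the characterization
\[
\bicc{(\widetilde{\M_E})}=\overline{((\M_E)'_\w)'\restr{\widetilde{\D}(\M_E)}}^{\tau_{s^*}}\cap\L\ad(\widetilde{\D}(\M_E)),
\]
valid because $(\M_E)'_\w$ leaves $\widetilde{\D}(\M_E)$ invariant, together with the companion identity $\M_E=\overline{((\M_E)'_\w)'\restr{E\D}}^{\tau_{s^*}}\cap\L\ad(E\D)$ for the un-closed version, and argue that every element of $\bicc{(\widetilde{\M_E})}$ arises as the graph-extension $\widetilde{X}$ of some $X\in\M_E$.

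The main obstacle is this last identification. Elements of $\bicc{(\widetilde{\M_E})}$ act on the enlarged domain $\widetilde{\D}(\M_E)$ and need not, a priori, restrict to elements of $\L\ad(E\D)$, so one cannot directly invoke $\M_E=\bicc{(\M_E)}$ on a restriction. The hypothesis $\overline{\M_b}^{s^*}=(\M'_\w)'$ is decisive here: through Lemma \ref{lemma 2.2} it yields an analogous density statement at the level of $E\H$, so the $\tau_{s^*}$-approximation used to realize elements of $\M_E$ out of $((\M_E)'_\w)'\restr{E\D}$ can be transported to $\widetilde{\D}(\M_E)$, producing exactly the operators of $\widetilde{\M_E}$.
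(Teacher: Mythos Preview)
Your approach is precisely the one the paper has in mind: the authors state Theorem \ref{2.4} with no proof at all, presenting it as a direct consequence of Lemma \ref{lemma 2.2}. Your verifications that $\M_E$ is an O*-algebra, that $(\M_E)'_\w=(\M'_\w)_E$ is a von Neumann algebra leaving $E\D$ (and, by your continuity argument, $\widetilde{\D}(\M_E)$) invariant, and that $\M_E=\bicc{(\M_E)}$, are all correct and more explicit than anything in the paper.

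Where you go beyond the paper is in flagging the passage from $\M_E=\bicc{(\M_E)}$ on $E\D$ to $\widetilde{\M_E}=\bicc{(\widetilde{\M_E})}$ on $\widetilde{\D}(\M_E)$. Your concern is legitimate: an element $Y\in\bicc{(\widetilde{\M_E})}\subset\L\ad(\widetilde{\D}(\M_E))$ need not a priori restrict to an element of $\L\ad(E\D)$, since the graph topology $t_{\M_E}$ can be strictly weaker than $t_\M\restr{E\D}$ and hence $E\D\subsetneq\widetilde{\D}(\M_E)$ is possible. The paper simply does not address this step. Your closing paragraph gestures at a resolution via $\tau_{s^*}$-approximation transported from $E\D$ to $\widetilde{\D}(\M_E)$, but as written it is not a proof: you would need to show that a $\tau_{s^*}$-limit in $\L\ad(\widetilde{\D}(\M_E))$ of operators from $\widetilde{\M_E}$ stays in $\widetilde{\M_E}$, and $\widetilde{\M_E}$ is not obviously $\tau_{s^*}$-closed in that larger space. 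So you have identified a genuine gap that the paper leaves open; your sketch does not close it, but neither does the paper.
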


\begin{cor} Let $T$ be a self-adjoint operator in $\H$ and $\M$ be a GW*-algebra on $\D^\infty(T)= \bigcap_{n\in {\mb N}} D(T^n)$, containing $T\restr \D^\infty(T)$. Let $E$ be a projection in $\M$. Then the closure of $\M_E$ is a GW*-algebra.
\end{cor}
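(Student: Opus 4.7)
The plan is to reduce the corollary to Theorem \ref{2.4} by verifying the single nontrivial hypothesis there, namely that $\overline{\M_b}^{s^*} = (\M'_\w)'$. Since $\M$ is a GW*-algebra, we already have $\M = \bicc{\M}$ and $\M'_\w\D\subset \D$, so $(\M'_\w)'$ is a von Neumann algebra. The inclusion $\overline{\M_b}^{s^*} \subset (\M'_\w)'$ is immediate: any $X \in \M_b$ lies in $\bicc{\M}$, hence its closure (which is $X$ itself since $X$ is bounded) is affiliated with $(\M'_\w)'$, i.e.\ $X \in (\M'_\w)'$; and $(\M'_\w)'$ is closed in the strong* topology. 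So the real content is the reverse inclusion $(\M'_\w)' \subset \overline{\M_b}^{s^*}$.

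To prove this, I would exploit the hypothesis $T\restr \D^\infty(T) \in \M$. Because $\D^\infty(T)$ is a core for the self-adjoint $T$, the closure of $T\restr \D^\infty(T)$ is $T$; and since $\M=\bicc{\M}$, $T$ is affiliated with $(\M'_\w)'$. Therefore the spectral projections $e_n := \chi_{[-n,n]}(T)$ all lie in $(\M'_\w)'$. The key observation is that $e_n(\H) \subset \D^\infty(T)=\D$, so $e_n\D \subset \D$ and $e_n^*\D \subset \D$, giving $e_n \in \LD$. Being bounded and affiliated with $(\M'_\w)'$, each $e_n$ belongs to $\bicc{\M} \cap \BH = \M_b$.

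Now, given any $A \in (\M'_\w)'$, consider the cut-off $e_n A e_n$. It is bounded, lies in $(\M'_\w)'$, and its range is contained in $e_n(\H) \subset \D$; the same applies to its adjoint $e_n A^* e_n$. Hence $e_n A e_n \restr\D$ belongs to $\LD \cap (\M'_\w)' = \bicc{\M} \cap \BH = \M_b$. Since $e_n \to \Id$ strongly by the spectral theorem, a routine estimate
\[
\|(e_n A e_n - A)\xi\| \leq \|(e_n-\Id)A(e_n-\Id)\xi\| + \|(e_n-\Id)A\xi\| + \|A\|\,\|(e_n-\Id)\xi\|
\]
shows $e_n A e_n \to A$ strongly, and the analogous estimate for $e_n A^* e_n$ gives convergence of adjoints. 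Thus $A \in \overline{\M_b}^{s^*}$, completing the proof of the equality. The hypotheses of Theorem \ref{2.4} are then met, and its conclusion gives that the closure of $\M_E$ is a GW*-algebra on $\widetilde{\D}(\M_E)$.

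I do not foresee a serious obstacle: the only delicate point is the invariance $e_n A e_n \D \subset \D$, which works precisely because the range of each spectral projection $e_n$ automatically sits inside $\D^\infty(T)$. This is exactly where the assumption $T\restr \D^\infty(T) \in \M$ is used.
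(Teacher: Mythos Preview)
Your argument is correct and follows essentially the same route as the paper: both verify the hypothesis $\overline{\M_b}^{s^*}=(\M'_\w)'$ of Theorem \ref{2.4} by using the spectral projections $E_k=E_T(k)-E_T(-k)$ of $T$ (which land in $\M_b$ because $E_k\H\subset\D^\infty(T)$ and $T$ is affiliated with $(\M'_\w)'$) to approximate an arbitrary $A\in(\M'_\w)'$ by the cut-offs $E_kAE_\ell$ in the strong* topology. You use only the diagonal $e_nAe_n$, while the paper states density of the whole family $\{E_kAE_\ell\}$, but this is an inessential variation; your write-up is in fact more detailed than the paper's, which dispatches the matter with ``it is easily seen.''
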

\begin{proof} Let $T=\int_{-\infty}^ \infty \lambda dE_T(\lambda)$ be the spectral resolution of $T$. Put $E_k=E_T(k)-E_T(-k)$, $k \in {\mb N}$. Then it is easily seen that the set $\{E_kAE_\ell;\, A\in (\M_\w')',\,k,\ell \in {\mb N}\}$ is contained in $\overline{\M_b}$ and it is dense in $(\M_\w')'$ with respect to the strong*-topology. This implies, by Theorem \ref{2.4}, that the closure of $\M_E$ is a GW*-algebra.
\end{proof}
\section{Reduced partial GW*-algebras} \label{sect_3}

In a previous paper \cite{bit1} we have studied the so-called {\em reduced O*-algebras} obtained via a projection picked in the weak commutant. Here we consider the analogous problem for partial O*-algebras and we show that similar results can still be deduced.

Let $\M$ be a fully closed O*-vector space in $ \LL^\dagger(\D,\Hil ) $ such that $\M_\w'\D\subset\D$, and $E'$ a projection in $\M_\w'$. We put
$$
\M_{E'}=\{X_{E'}:=X\restr{E'\D}:\,X\in\M\}.
$$
Then $\M_{E'}$ is a fully closed O*-vector space on $E'\D$ satisfying the following properties:
\begin{itemize}

\item $(\M_{E'})'_\w=(\M_\w')_{E'}$,
\item $(\M_{E'})'_\w E'\D\subset E'\D$,
\item $((\M_{E'})'_\w)'=((\M_{\w}')_{E'})'=((\M_\w')')_{E'}$,
\item if $\M$ is a O*-algebra on $\D$ then $\M_{E'}$ is a O*-algebra on $E'\D$.
\end{itemize}
However, even if $\M$ is a partial O*-algebra on $\D$, it is not guaranteed that $\M_{E'}$ is a partial O*-algebra on $E'\D$. This is because $X_{E'}\mult Y_{E'}$ could be well defined, without $X\mult Y$ being well defined.

\medskip
The set $\M_{E'}$ is the {\em reduction} of $\M$. We have the following questions:

{\bf Question 3:} Suppose that $\M$ is a partial GW*-algebra on $\D$ and $E'$ a projection in $\M_\w'$. Is then $\M_{E'}$ a partial GW*-algebra on $E'\D$?

To answer to this question it is necessary first to answer the following preliminary

{\bf Question 4:} Let $\M$ be a fully closed O*-vector space in $ \LDH$ such that $\M_\w'\D\subset\D$, and $E'$ a projection in $\M_\w'$. Does then the equality $(\bic{\M})_{E'}=\bic{(\M_{E'})}$ hold?

It is clear that $(\bic{\M})_{E'}\subset\bic{(\M_{E'})}$.

We now look for conditions under which the converse inclusion holds.

 Let $Z_{E'}$ be the central support of $E'$, that is the projection on $\overline{\left<\M_\w'E'\Hil\right>}$. Let $\E\equiv \left<\M_\w'E'\D\right>\oplus (\Id-Z_{E'})\D\subset\D$. Clearly, $\left<\M_\w'E'\D\right>\subset Z_{E'}\D$.

For  $X\in \bic{(\M_{E'})}$, we define $$X_e\left(\sum_k\,C_k\,E'\,\xi_k+(\Id-Z_{E'})\eta\right):= \sum_k\,C_k\,X\,E'\,\xi_k,$$ with $C_k\in\M_\w',$ $\xi_k,\eta\in\D.$ Then $X_e$ extends $X$ to $\E$.

It is easily shown that $e(\M_{E'})\equiv (\bic{(\M_{E'})})_e:=\{X_e: \,X\in \bic{(\M_{E'})}\}$ is a partial O*-algebra on $\E$ in $\Hil$.

\begin{lemma}\label{Lemma 4.1}
Let $\hat\D(e(\M_{E'}))=\D$ or, equivalently, let the full closure of $\left<\M_\w'E'\D\right>$ w.r.t. $e(\M_{E'})$ equal $Z_{E'}\D$. Then $(\bic{\M})_{E'}=\bic{(\M_{E'})}$.
\end{lemma}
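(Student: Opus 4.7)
The plan is to prove the nontrivial inclusion $\bic{(\M_{E'})}\subset (\bic{\M})_{E'}$ by lifting each $Y\in\bic{(\M_{E'})}$ to an operator $\widetilde{Y}\in\bic{\M}$ whose restriction to $E'\D$ recovers $Y$. The natural candidate is the algebraic extension $Y_e$ constructed just before the lemma, followed by the full closure onto $\D$ that the hypothesis $\hat\D(e(\M_{E'}))=\D$ places at our disposal.

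First I would check that
$$Y_e\Bigl(\sum_k C_k E'\xi_k+(\Id-Z_{E'})\eta\Bigr):=\sum_k C_k Y E'\xi_k$$
is unambiguous and that $(Y^\dagger)_e\subset (Y_e)^*$. Both facts rest on the same calculation: given $C_k,D_j\in\M'_\w$ (a von Neumann algebra, since $\M'_\w\D\subset\D$) and $\xi_k,\eta_j\in\D$, the operator $(E'D_j^*C_k E')\restr{E'\D}$ lies in $(\M'_\w)_{E'}=(\M_{E'})'_\w$, so the weak bicommutant identity enjoyed by $Y$ on $E'\D$ gives
$$\langle C_k Y E'\xi_k,D_j E'\eta_j\rangle=\langle C_k E'\xi_k,D_j Y^\dagger E'\eta_j\rangle.$$
Summing in $k,j$ and using the orthogonal decomposition $\H=Z_{E'}\H\oplus(\Id-Z_{E'})\H$ together with the density of $\langle\M'_\w E'\D\rangle$ in $Z_{E'}\H$ yields both well-definedness of $Y_e$ and the symmetry relation, so $Y_e\in\LL^\dagger(\E,\H)$.

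Next, the hypothesis $\hat\D(e(\M_{E'}))=\D$ tells us that both $\overline{Y_e}$ and $\overline{(Y^\dagger)_e}$ contain $\D$ in their domain. Setting $\widetilde{Y}:=\overline{Y_e}\restr{\D}$ therefore defines an element of $\LDH$, with $\widetilde{Y}^\dagger=\overline{(Y^\dagger)_e}\restr{\D}$, which agrees with $Y$ on $E'\D\subset\E$. To conclude $\widetilde{Y}\in\bic{\M}$ I must establish $\langle C\widetilde{Y}\xi,\eta\rangle=\langle C\xi,\widetilde{Y}^\dagger\eta\rangle$ for every $C\in\M'_\w$ and $\xi,\eta\in\D$. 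On the algebraic core $\E\times\E$ this identity follows exactly from the computation above, since $C$ commutes with both $E'$ and $Z_{E'}$ (as $Z_{E'}\in(\M'_\w)'$). For general $\xi,\eta\in\D$ one approximates: choose sequences $\xi_n,\eta_n\in\E$ with $\xi_n\to\xi$, $Y_e\xi_n\to\widetilde{Y}\xi$, $\eta_n\to\eta$ and $(Y^\dagger)_e\eta_n\to\widetilde{Y}^\dagger\eta$, which exist precisely because $\xi,\eta\in\D\subset D(\overline{Y_e})\cap D(\overline{(Y^\dagger)_e})$; boundedness of $C$ then lets both sides pass to the limit.

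The main obstacle I anticipate is exactly this last passage: one must verify that the weak bicommutant identity, valid in the algebraic extension step on $\E$, genuinely survives the closure to $\D$ for both $Y_e$ and $(Y^\dagger)_e$ simultaneously. Once this is settled, $\widetilde{Y}\in\bic{\M}$ with $(\widetilde{Y})_{E'}=\widetilde{Y}\restr{E'\D}=Y$, which gives the reverse inclusion and proves the lemma.
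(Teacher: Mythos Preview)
Your proposal is correct and follows the same architecture as the paper: lift $Y\in\bic{(\M_{E'})}$ to $Y_e$ on $\E$, use the hypothesis $\hat\D(e(\M_{E'}))=\D$ to pass to $\hat Y_e\in\LDH$, then verify membership in $\bic{\M}$. The only difference is in this last verification. The paper observes the \emph{strong} commutation $CX_e\zeta=X_eC\zeta$ for every $\zeta\in\E$ and $C\in\M'_\w$ directly from the definition of $X_e$ (using that $C$ commutes with $Z_{E'}$ and that $CC_k\in\M'_\w$); since $C$ is bounded and $C\D\subset\D$, this identity passes to the closure without any approximation argument, giving $C\hat X_e=\hat X_e C$ on $\D$ immediately. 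Your route through the weak identity $\langle C\widetilde{Y}\xi,\eta\rangle=\langle C\xi,\widetilde{Y}^\dagger\eta\rangle$ and simultaneous approximation in $\overline{Y_e}$ and $\overline{(Y^\dagger)_e}$ is also valid, but the strong form on $\E$ makes the closure step a one-liner and sidesteps the ``main obstacle'' you flagged.
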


\begin{proof} By the assumption, for every $X\in \bic{(\M_{E'})}$, $\hat{X_e}$ is an operator on $\D$ and $\hat{X_e}\restr{E'\D}= X$. Furthermore, for every $C, C_k \in \M_\w',\, \xi_k, \eta \in \D $, we have
\begin{eqnarray*} CX_e\left( \sum_k\,C_k\,E'\,\xi_k+(\Id-Z_{E'})\eta \right)&=& \sum_k\,CC_k\,XE'\,\xi_k \\
&=& X_e C\left( \sum_k\,C_k\,E'\,\xi_k+(\Id-Z_{E'})\eta \right).
\end{eqnarray*}
Therefore, $\hat{X_e} \in \bic{\M}$ and, hence, $X= (\hat{X_e})_{E'}\in {(\bic{\M})}_{E'}$.
\end{proof}

\berem We notice that, under the conditions of Lemma \ref{Lemma 4.1}, ${(\bic{\M})}_{E'}$ is a partial O*-algebra. This may fail to be true in the general case.
\enrem

The condition of Lemma \ref{Lemma 4.1} is actually satisfied in some interesting situation, as the proof of the next theorem shows.

\begin{thm} \label{thm_4.2}Let $\M$ be a self-adjoint O*-vector space on $\D$ in $\Hil$. Assume that the topology $t_\M$ is defined by a norm-increasing sequence $\{T_n\}$ (i.e., $\|T_n\xi\| \leq \|T_{n+1}\xi\|,$ $\forall \xi\in \D, n \in {\mb N}$) of essentially self-adjoint operators of $\M$, whose spectral projections leave the domain $\D$ invariant. Let $E'$ be a projection in $\M_\w'$. Then, $(\bic{\M})_{E'}=(\M_{E'})_{\w\sigma}''$. In particular, if $\M$ is a partial GW*-algebra, then $\M_{E'}$ is a partial GW*-algebra on $E'\D$.
\end{thm}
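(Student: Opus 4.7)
The plan is to verify the hypothesis of Lemma~\ref{Lemma 4.1}, namely $\widehat{\D}(e(\M_{E'}))=\D$, by constructing for each $\zeta\in Z_{E'}\D$ and each $Y\in\bic{(\M_{E'})}$ a sequence in $\langle\M_\w'E'\D\rangle$ that converges to $\zeta$ in $\H$ and on which $Y_e$ produces a convergent sequence in $\H$.

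First I would exploit the self-adjointness of $\M$, which makes $\M_\w'$ a von Neumann algebra and, via the weak-commutant definition applied to $T_n\in\M$, forces $CT_n\xi=T_nC\xi$ for every $C\in\M_\w'$ and every $\xi\in\D$. Since $Z_{E'}$ is the central support of $E'$ in $\M_\w'$, standard comparison theory supplies partial isometries $\{V_i\}\subset\M_\w'$ with $V_i^*V_i\le E'$ (so that $V_iE'=V_i$ and $E'V_i^*=V_i^*$), mutually orthogonal final projections $V_iV_i^*$, and $\sum_iV_iV_i^*=Z_{E'}$ in the strong operator topology. The natural candidate approximating sequence is
\[
\zeta_n:=\sum_{i\le n}V_iE'V_i^*\zeta\in\langle\M_\w'E'\D\rangle,
\]
and $\zeta_n\to\zeta$ in $\H$ is automatic from $\sum_iV_iV_i^*\zeta=Z_{E'}\zeta=\zeta$.

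The hard part will be to show that $Y_e\zeta_n=\sum_{i\le n}V_iYV_i^*\zeta$ converges for every $Y\in\bic{(\M_{E'})}$. The key observation will be that $\D[t_\M]$ is a Fr\'echet space: self-adjointness forces $\D=\widehat{\D}(\M)$, so $\D$ is complete under the countable family of seminorms $\|T_n\cdot\|$; since $E'$ commutes with every $T_n$, the projection $E'$ is $t_\M$-continuous and $E'\D$ is a closed, hence Fr\'echet, subspace of $\D[t_\M]$. Closability of $Y$ in $\H$ makes $Y:E'\D[t_{\M_{E'}}]\to E'\H$ have closed graph, so the closed graph theorem combined with the norm-increasing property of $\{T_n\}$ will yield constants $M>0$ and $N\in\N$ with
\[
\|Y\eta\|\le M\bigl(\|\eta\|+\|T_N\eta\|\bigr)\qquad\text{for all }\eta\in E'\D.
\]

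Plugging $\eta=V_i^*\zeta\in E'\D$ into this bound, using $T_NV_i^*=V_i^*T_N$ on $\D$, and exploiting the orthogonality of the ranges $V_iV_i^*\H$, I would estimate
\[
\sum_i\|V_iYV_i^*\zeta\|^2\le 2M^2\sum_i\bigl(\|V_i^*\zeta\|^2+\|V_i^*T_N\zeta\|^2\bigr)=2M^2\bigl(\|Z_{E'}\zeta\|^2+\|Z_{E'}T_N\zeta\|^2\bigr)<\infty,
\]
so $Y_e\zeta_n$ is Cauchy in $\H$ and $\zeta$ lies in the domain of the closure of $Y_e|_{\langle\M_\w'E'\D\rangle}$. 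Lemma~\ref{Lemma 4.1} then gives $(\bic{\M})_{E'}=\bic{(\M_{E'})}=(\M_{E'})''_{\w\sigma}$. For the concluding assertion, if $\M$ is a partial GW*-algebra then $\M=\bic{\M}$ forces $\M_{E'}=\bic{(\M_{E'})}$, which together with the full closedness of $\M_{E'}$ on $E'\D$ and the property $(\M_{E'})_\w'E'\D\subset E'\D$ recorded at the beginning of the section yields that $\M_{E'}$ is a partial GW*-algebra. The Fr\'echet/closed-graph step is the decisive move, since it converts the qualitative membership $Y\in\bic{(\M_{E'})}$ into the quantitative domination by $T_N$ needed to sum the partial-isometry series.
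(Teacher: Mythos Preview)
Your argument is correct and follows a genuinely different route from the paper's.

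The paper first establishes the uniform estimate
\[
\Bigl\|X_e\Bigl(\sum_k C_kE'\xi_k\Bigr)\Bigr\|\le\gamma\,\Bigl\|T_{n_0}\Bigl(\sum_k C_kE'\xi_k\Bigr)\Bigr\|
\]
for \emph{arbitrary} finite sums in $\langle\M_\w'E'\D\rangle$, by forming the positive $n\times n$ matrix $(E'C_j^*C_kE')$, taking its square root $(B_{jk})$ inside $(\M_\w')_{E'}$, and using that $X$ commutes with the $B_{jk}$ in the $\sigma$-sense. It then picks an \emph{arbitrary} $\H$-approximation $\xi_n\to Z_{E'}\xi$ and regularises with the spectral projections $E_k$ of $T_{n_0}$ (this is where the hypothesis that spectral projections leave $\D$ invariant enters) so that $T_{n_0}E_k\xi_n$ converges and the estimate can be applied. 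Your approach, by contrast, produces a \emph{specific} approximating sequence $\zeta_n=\sum_{i\le n}V_iV_i^*\zeta$ coming from a comparison-theoretic partial-isometry decomposition of $Z_{E'}$, and exploits the mutual orthogonality of the ranges $V_iV_i^*\H$ to sum the bounds $\|YV_i^*\zeta\|\le M(\|V_i^*\zeta\|+\|V_i^*T_N\zeta\|)$ directly, without any matrix manipulation. A notable by-product is that your proof never invokes the spectral projections of the $T_n$; under the remaining hypotheses (self-adjointness of $\M$ and the metrizability of $t_\M$ via $\{T_n\}$) your argument already gives $(\bic{\M})_{E'}=\bic{(\M_{E'})}$. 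The trade-off is that the paper's matrix square-root estimate yields a seminorm bound valid on the whole of $\langle\M_\w'E'\D\rangle$, which is of independent interest (and is exactly the ingredient needed to repair the gap noted in the remark following the theorem), whereas your orthogonality argument is tailored to the particular sequence $\zeta_n$.
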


\begin{proof} Since $\M$ is a self-adjoint, then $\M_\w' \D \subset \D$ and $\bic{\M}$ is a self-adjoint partial O*-algebra on $\D$. Furthermore,
\begin{equation}\label{4.1} t_\M= t_{\bic{\M}}= t_{\Lc^\dagger(\D,\Hil)}\end{equation} since $\M$ is closed and $t_\M$ is metrizable.
Let us consider again $\E\equiv \left<\M_\w'E'\D\right>\oplus (\Id-Z_{E'})\D\subset\D$, where $Z_{E'}$ is the central support of $E'$.

Now, take an arbitrary $X \in (\M_{E'})_{\w\sigma}''$. Then, since $X=E'XE' \in \LDH$, from \eqref{4.1} it follows that
\begin{equation}\label{3.2_new}
\|XE' \xi\| \leq \gamma \| T_{n_0}E'\xi\|, \quad \forall \xi \in \D,
\end{equation}
for some $\gamma>0$ and $n_0\in {\mb N}$. Let us now consider arbitrary operators $C_k \in \M_\w'$ and vectors $\xi_k\in \D$, $k=1,2, \ldots, n$. Then, the $n \times n$ matrix $(E'C_j^*C_kE')$ is positive. Let us denote by $(B_{jk})$ its square root. Then, since $B_{jk}=E'B_{jk}E' \in (\M_\w')_{E'}$ and $X \in ((\M_\w')_{E'})'_\sigma$, we have

\begin{eqnarray}\label{newineq}
{}\;\;\left\| X_e\left( \sum_k C_k E' \xi_k\right)\right\|^2&=& \left\|\sum_k C_k X E' \xi_k\right\|^2\nonumber\\
&=& \sum_{k,j}\ip{E'C_j^*C_kE'XE'\xi_k}{XE'\xi_j} \nonumber \\
&=& \sum_j\left\|\sum_k B_{jk}XE'\xi_k \right\|^2 \nonumber \\
&=& \sum_j\left\|X\left(\sum_k B_{jk}E'\xi_k\right) \right\|^2 \\
&\leq &\gamma^2 \sum_j\left\|T_{n_0}\left(\sum_k B_{jk}E'\xi_k\right)\right\|^2 \, \mbox{(by \eqref{3.2_new})}\nonumber \\
&=& \gamma^2 \left\|\sum_kC_kT_{n_0} E'\xi_k\right\|^2 \nonumber\\
&=& \gamma^2 \left\|T_{n_0}\left( \sum_kC_k E'\xi_k \right)\right\|^2, \nonumber
\end{eqnarray}
which implies that the full closure of $\left<\M_\w'E'\D\right>$ w.r.t. $e(\M_{E'})$ equals $Z_{E'}\D$. Indeed, take an arbitrary $\xi \in \D$. Then, there exists a sequence $\{\xi_n\}$ in $\left<\M_\w'E'\D\right>$ which converges to $Z_{E'}\xi$.
 Let $
\overline{T}_{n_0}=\int_{-\infty}^{\infty} \lambda d E_{T_{n_0}}(\lambda)$ be the spectral resolution of the self-adjoint operator $\overline{T}_{n_0}$. We put $E_k:=E_{T_{n_0}}(k)-E_{T_{n_0}}(-k)$, $k \in {\mb N}$. Then, $E_k \in (\M'_\w)'$ and $E_k \D \subset \D$, for every $k \in {\mb N}$, by the assumption. Hence, $E_k\xi_n\in \left<\M_\w'E'\D\right>$, $\forall k,n \in {\mb N}$ and $$\lim_{k\to\infty}\lim_{n\to\infty}E_k\xi_n=Z_{E'}\xi.$$
Furthermore, since
$$\lim_{k\to\infty}\lim_{n\to\infty}T_{n_0}E_k\xi_n=\lim_{k\to\infty}T_{n_0}E_kZ_{E'}\xi= T_{n_0}Z_{E'}\xi,$$ from \eqref{newineq} it follows that $Z_{E'}\xi \in D(\overline{X_e})$ and $\overline{X_e}Z_{E'}\xi= \lim_{k,n \to \infty}X_e E_k \xi_n$, which implies that $Z_{E'}\xi \in \widehat{\D}(e(\M_{E'}))$. Thus, the full closure of $\left<\M_\w'E'\D\right>$ w.r.t. $e(\M_{E'})$ equals $Z_{E'}\D$. \\
By Lemma \ref{Lemma 4.1} we have
$$\bic{(\M_{E'})} = (\bic{\M})_{E'}.$$
Therefore, if $\M$ is a partial GW*-algebra, then $\M_{E'}$ is also a partial GW*-algebra.
\end{proof}

\berem In the proof of Theorem 2.5 of \cite{bit1} we claimed that the inequality
\begin{equation}\label{gap} \| X_e (CE'\xi)\|\leq \gamma \|T_{n_0}CE' \xi \|, \, \forall C \in \M'_\w, \forall \xi \in \D\end{equation} (which was proved in that paper) implies the condition (i) of Lemma 2.3 of \cite{bit1}. Actually, \eqref{gap} is not enough. Indeed what we needed to prove was, in fact, that
 $$\left\| X_e \left(\sum_kC_kE'\xi_k\right)\right\|\leq \gamma \left\|T_{n_0}\left(\sum_kC_kE'\xi_k\right)\right \|, \, \forall C_k \in \M'_\w, \forall \xi_k \in \D.$$
This inequality does really hold: it can be proved similarly to \eqref{newineq}.

\enrem
By Theorem \ref{thm_4.2} we get the following

\begin{cor} \label{cor_4.3} Let $T$ be a self-adjoint operator in $\H$ and  $\M$ be an O*-vector space in $\L\ad(\D^\infty (T),\H)$ where $\D^\infty (T):=\bigcap_{n\in {\mb
N}}D(T^n)$, containing $T^n\restr \D^\infty (T)$, for every $n \in {\mb N}$. Let $E'$ be a projection in $\M'_\w$. Then
$\bic{(\M_{E'})}=(\bic{\M})_{E'}$. In particular, if $\M$ is a partial GW*-algebra on $\D^\infty (T)$, then $\M_{E'}$ is a partial GW*-algebra on $E'\D^\infty (T)$.
\end{cor}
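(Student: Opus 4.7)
The plan is to deduce this corollary from Theorem~\ref{thm_4.2} by producing, inside $\M$, an explicit sequence $\{T_n\}$ with the required properties, and by verifying that $\M$ itself is self-adjoint.

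Self-adjointness of $\M$ is immediate: since $\D^\infty(T)$ is a core for each self-adjoint operator $T^n$, the Hilbert space adjoint of $T^n\restr\D^\infty(T)\in\M$ is $T^n$ itself, so
\[
\D^*(\M)\;\subset\;\bigcap_{n\in\N}D(T^n)\;=\;\D^\infty(T),
\]
with the opposite inclusion automatic.

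For the sequence I would take $T_n:=(I+T^2)^n$, $n\in\N$. Binomial expansion writes $T_n$ as a polynomial in $T$ whose monomials $T^{2k}$ all lie in $\M$, so $T_n\in\M$. The operator inequality $(I+T^2)^{2n}\leq (I+T^2)^{2(n+1)}$, which is immediate from $I+T^2\geq I$ via the functional calculus of $T$, yields the norm-increasing condition $\|T_n\xi\|\leq\|T_{n+1}\xi\|$. The same functional calculus shows that $T_n$ is essentially self-adjoint on $\D^\infty(T)$ with closure $(I+T^2)^n$, and that the spectral projections of $T_n$ are spectral projections of $T$ (evaluated on Borel pre-images of $\lambda\mapsto(1+\lambda^2)^n$); the latter leave $\D^\infty(T)$ invariant because those of $T$ do.

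The step I expect to require most care is the identification of $t_\M$ with the topology $\tau$ on $\D^\infty(T)$ generated by $\{\|\cdot\|,\|T_n\cdot\|\}_{n\in\N}$. The inclusion $\tau\subseteq t_\M$ is clear since $T_n\in\M$. For the converse, $(\D^\infty(T),\tau)$ is a Fr\'echet space: the identity $D((I+T^2)^n)=D(T^{2n})$ and the closedness of each $T_n$ on the common domain $\bigcap_n D(T^{2n})=\D^\infty(T)$ give completeness. Given $X\in\M$, the map $X:(\D^\infty(T),\tau)\to\Hil$ has closed graph, because if $\xi_k\to\xi$ in $\tau$ and $X\xi_k\to\eta$ in $\Hil$ then $\xi_k\to\xi$ in $\Hil$ too, and closability of $X$ forces $\eta=\overline X\xi=X\xi$. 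The closed graph theorem then supplies $n_0\in\N$ and $C>0$ with $\|X\xi\|\leq C\|T_{n_0}\xi\|$, completing $t_\M\subseteq\tau$. With all hypotheses of Theorem~\ref{thm_4.2} in place, the conclusions $\bic{(\M_{E'})}=(\bic{\M})_{E'}$ and, when $\M$ is a partial GW*-algebra, the partial GW*-algebra structure of $\M_{E'}$ on $E'\D^\infty(T)$ follow at once.
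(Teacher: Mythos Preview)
Your argument is correct and is precisely the intended one: the paper states the corollary as an immediate consequence of Theorem~\ref{thm_4.2} with no further proof, and you have supplied the routine verification of that theorem's hypotheses (self-adjointness via $\D^*(\M)\subset\bigcap_n D(T^n)$, the Fr\'echet structure on $\D^\infty(T)$, and the closed-graph estimate). One small point to watch: writing $T_n=(I+T^2)^n$ as a linear combination of the $T^{2k}$ tacitly uses $I\in\M$, which is automatic when $\M$ is a partial GW*-algebra (since $\M=\bic{\M}\ni I$) or when $0\in\N$, but is not guaranteed for a bare O*-vector space; if you want to cover that edge case you can simply adjoin $I$ without affecting $\M'_\w$, $\bic{\M}$, or $t_\M$.
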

\begin{cor} \label{cor_4.4} Let $\M$ be a partial EW*-algebra on the Fr\'echet domain $\D$ in Hilbert space $\H$ and
$E'$ a projection in $\M'_\w$. Then, $\bic{(\M_{E'})}=(\bic{\M})_{E'}$.
\end{cor}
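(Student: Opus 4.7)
The plan is to derive Corollary \ref{cor_4.4} by reducing it to Theorem \ref{thm_4.2}. What I have to check, inside the partial EW*-algebra $\M$, is the trio of hypotheses of that theorem: (i) $\M$ is self-adjoint, (ii) the graph topology $t_\M$ is generated by a norm-increasing sequence $\{T_n\}\subset \M$ of essentially self-adjoint operators, and (iii) the spectral projections of each $T_n$ leave $\D$ invariant. Once these are established, Theorem \ref{thm_4.2} delivers $\bic{(\M_{E'})}=(\bic{\M})_{E'}$ immediately.

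The construction of $\{T_n\}$ exploits the Fréchet hypothesis together with the EW*-structure. Since $\D[t_\M]$ is metrizable, we can select a countable family $\{X_j\}\subset \M$ such that the seminorms $\|\cdot\|_{X_j}$ generate $t_\M$. Using that each $\overline{X_j}$ is affiliated with the von Neumann algebra $\overline{\M_b}$, I would form the positive self-adjoint operators
\[
H_n := \Bigl(\Id + \sum_{j=1}^{n}\overline{X_j}^{*}\,\overline{X_j}\Bigr)^{1/2},
\]
which are again affiliated with $\overline{\M_b}$. Their spectral projections then lie in $\overline{\M_b}$, and the assumption $\M_b\D\subset\D$ forces these projections to leave $\D$ invariant. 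Setting $T_n := H_n\restr{\D}$, one checks $T_n\in\M$ from the EW*-definition (its closure is affiliated with $\overline{\M_b}$, $D(T_n)=\D$, and $D(T_n^{*})\supset\D$ because $H_n$ is self-adjoint), $T_n$ is essentially self-adjoint, and the sequence is norm-increasing with $\|T_n\xi\|\ge\|X_j\xi\|$ for $j\le n$, giving (ii) and (iii).

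For self-adjointness of $\M$, the Fréchet property yields $\D=\bigcap_n D(\overline{T_n})$; since each $\overline{T_n}$ is self-adjoint and since the seminorms $\|T_n\cdot\|$ dominate $\|X\cdot\|$ for every $X\in\M$ (by equivalence of $t_\M$ with $\{\|T_n\cdot\|\}_n$ on the complete space $\D[t_\M]$), a standard comparison gives $\D^{*}(\M)\subset\bigcap_n D(T_n^{*})=\bigcap_n D(\overline{T_n})=\D$, hence (i). Applying Theorem \ref{thm_4.2} then closes the argument, and the last assertion of Corollary \ref{cor_4.4} follows because the EW*-hypothesis entails in particular that $\M$ is a partial GW*-algebra.

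The main obstacle is the second paragraph: producing $T_n$ genuinely inside $\M$ (and not merely in a larger maximal partial O*-algebra) and verifying essential self-adjointness. The delicate point is that the algebraic manipulations $\overline{X_j}^{*}\overline{X_j}$ and the square root take place at the level of operators affiliated with $\overline{\M_b}$ rather than inside the partial O*-algebra $\M$ itself, and one has to read $T_n\in\M$ back from the EW*-definition; the bounded spectral projections are then automatically in $\overline{\M_b}$, so the condition $\M_b\D\subset\D$ is what guarantees invariance of $\D$.
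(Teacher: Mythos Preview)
The paper gives no proof of Corollary~\ref{cor_4.4}; it is stated (like Corollary~\ref{cor_4.3}) as a direct consequence of Theorem~\ref{thm_4.2}, so your strategy of checking the hypotheses of that theorem is precisely the intended one.

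You have correctly identified the delicate step and even flagged it yourself at the end, but the argument you offer for it does not go through. The EW*-definition says that every element of $\M$ has closure affiliated with $\overline{\M_b}$; it does \emph{not} say conversely that every operator on $\D$ whose closure is affiliated with $\overline{\M_b}$ must belong to $\M$. Hence ``one checks $T_n\in\M$ from the EW*-definition'' is unjustified, and the same gap contaminates your argument for self-adjointness, which relies on the inclusion $\D^*(\M)\subset\bigcap_n D(T_n^*)$. (The closing claim that a partial EW*-algebra is automatically a partial GW*-algebra is likewise false in general, and Corollary~\ref{cor_4.4} contains no ``last assertion'' beyond the bicommutant equality.)

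The good news is that neither $T_n\in\M$ nor self-adjointness of $\M$ is actually needed. For a partial EW*-algebra one has $\M'_\w=(\overline{\M_b})'$, and since every $\overline{X_j}$ is affiliated with $\overline{\M_b}$, each $C\in\M'_\w$ leaves $D(\overline{X_j})$ invariant; as $\M$ is fully closed and $\D[t_\M]$ is Fr\'echet, $\D=\bigcap_j D(\overline{X_j})$, whence $\M'_\w\D\subset\D$. The Fr\'echet hypothesis alone then gives $t_\M=t_{\LDH}$ via the closed graph theorem. Finally, your operators $H_n$ (best defined via the form sum, to avoid the domain problems of the operator sum $\sum_j\overline{X_j}^{\,*}\overline{X_j}$) are affiliated with $\overline{\M_b}=(\M'_\w)'$, so their spectral projections lie in $\overline{\M_b}$ and preserve $\D$ because $\overline{\M_b}\,\D\subset\D$. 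These three ingredients are exactly what the proof of Theorem~\ref{thm_4.2} actually uses, and the argument runs through verbatim with $T_n:=H_n\restr{\D}\in\bic{\M}$ in place of $T_n\in\M$.
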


\section{Applications} \label{sect_4}

In this section we show how to use the results of Section \ref{sect_3} in the analysis of the existence of conditional
expectations for partial O*-algebras,  studied by Takakura \cite{takakura}.

\begin{thm}\label{thm_5.1}
Let $\M$ be a self-adjoint partial GW*-algebra on $\D$ in $\H$ and let $\xi_0$ be a strongly cyclic (i.e., $(R^\w\M)\xi_0$ is dense in $\D[t_\M]$)
and separating vector, in the sense that $\overline{\M'_\w\xi_0}=\H$. Let $\NG$ be a partial GW*-subalgebra of $\M$ satisfying
\begin{itemize}

\item[(N$_0$)] $(\NG\cap R^\w\M)\xi_0$ is dense in $\H_\NG:= \overline{\NG\xi_0}$;

\item[(N$_1$)] $\NG'_\w \D \subset \D$ ;
\item[(N$_2$)] $(\NG\cap R^\w\M)\xi_0$ is essentially self-adjoint for $\NG$

\item[(N$_3$)] ${\Delta_{\xi_0}''}^{it}(\NG_\w')'{\Delta_{\xi_0}''}^{-it}=(\NG_\w')'$, $\forall t\in\mathbb{R}$, where ${\Delta_{\xi_0}''}$ is the modular operator of the full left Hilbert algebra $(\M_\w')'\xi_0$.
\end{itemize}
Suppose that $t_\NG$ is defined by a sequence $\{T_n\}$ of essentially self-adjoint operators of $\NG$ whose spectral projections leave the domain $\D$ invariant. Then there exists a unique conditional expectation $\E$ of $(\M, \xi_0)$ with respect to $\NG$; that is $\E$ is a map of $\M$ onto $\NG$ such that
\begin{itemize}

\item[(i)] $\E(A)^\dagger=\E(A^\dagger)$,  $\forall\,A\in\M$,  and $\E(X)=X$, $\forall X\in\NG$;

\item[(ii)] $\E(A\mult X)=\E(A) \mult X$, for all $A\in \M$ and $X\in \NG\cap R^\w\M$ and $\E(X\mult A)=X \mult\E(A)$,  for all $A\in R^\w\M $, and $X\in\NG$;

\item[(iii)] $\omega_{\xi_0}(\E(A)):=\ip{\E(A)\xi_0}{\xi_0}=\ip{A\xi_0}{\xi_0}=\omega_{\xi_0}(A)$, for all $\,A\in\M$.
\end{itemize}
\end{thm}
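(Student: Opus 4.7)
The approach is to first obtain a normal conditional expectation at the bounded level via Tomita--Takesaki modular theory and then transfer it to $\M$ through the reduction procedure developed in Section \ref{sect_3}. The hypothesis on the sequence $\{T_n\}$ is exactly what enables the transfer via Theorem \ref{thm_4.2}.

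Step 1 (bounded conditional expectation). Since $\xi_0$ is strongly cyclic for $\M$ it is cyclic for $(\M'_\w)'$, and the assumption $\overline{\M'_\w\xi_0}=\H$ makes it separating for $(\M'_\w)'$. Hence $(\M'_\w)'\xi_0$ is a full left Hilbert algebra with modular operator $\Delta''_{\xi_0}$. Condition (N$_3$) is precisely the statement that the von Neumann subalgebra $(\NG'_\w)'\subset (\M'_\w)'$ is invariant under the modular automorphism group of $\omega_{\xi_0}$; by Takesaki's theorem, this yields a unique $\omega_{\xi_0}$-preserving normal conditional expectation $\E_0\colon (\M'_\w)'\to (\NG'_\w)'$.

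Step 2 (reduction machinery). Let $E'$ be the orthogonal projection of $\H$ onto $\H_\NG=\overline{\NG\xi_0}$. Condition (N$_0$) gives $\H_\NG=\overline{(\NG\cap R^\w\M)\xi_0}$, from which one checks that $E'\in \NG'_\w$; since $E'$ commutes with $\M'_\w$ one also has $E'\in(\M'_\w)''$, and the standard identification of $\H_\NG$ as a reducing subspace shows that $E'$ lies in the weak commutant relevant to the reduction. Conditions (N$_1$), (N$_2$) together with the norm-increasing sequence $\{T_n\}$ of essentially self-adjoint operators of $\NG$ whose spectral projections leave $\D$ invariant are exactly the hypotheses of Theorem \ref{thm_4.2} applied to $\NG$. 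Therefore $\NG_{E'}$ is a partial GW*-algebra on $E'\D$ and $\bic{(\NG_{E'})}=(\bic{\NG})_{E'}=\NG_{E'}$.

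Step 3 (lift and verification). For $A\in\M$, $\E_0(A)\in (\NG'_\w)'$ maps $E'\D$ into itself; the operator $\E_0(A)\restr{E'\D}$ then belongs to $\bic{(\NG_{E'})}$, which by Step 2 coincides with $(\bic{\NG})_{E'}=\NG_{E'}$. Hence there is a unique $X\in\NG$ with $X_{E'}=\E_0(A)\restr{E'\D}$, and one sets $\E(A):=X$. Uniqueness follows from the separating property of $\xi_0$ for $\M'_\w$. Properties (i) and (iii) follow from the corresponding bounded-level properties of $\E_0$ together with the defining identity $\E(A)\xi_0=\E_0(A)\xi_0$; the partial multiplicativity (ii) is obtained by invoking Takakura's proposition \cite{takakura}, whose hypotheses are verified in the present setting thanks to (N$_1$), (N$_2$) and the bicommutant identity of Step 2.

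The main obstacle is Step 3: one must show that $\E(A)$ actually lies in $\NG$ and not merely in some larger bicommutant, and that the partial multiplication relations in (ii) survive the passage from the bounded expectation $\E_0$ to the unbounded map $\E$ (where $X\mult A$ need not be defined even when $\E_0(A)\cdot X$ is). Both issues are controlled by the identification $\bic{(\NG_{E'})}=(\bic{\NG})_{E'}$ produced by Theorem \ref{thm_4.2}, which is precisely why the assumption on the sequence $\{T_n\}$ and the invariance of its spectral projections is the pivotal hypothesis of the theorem.
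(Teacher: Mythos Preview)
Your overall skeleton matches the paper: show that the projection $P_\NG$ onto $\H_\NG$ lies in $\NG'_\w$ with $P_\NG\D\subset\D$, apply Theorem~\ref{thm_4.2} to conclude that $\NG_{P_\NG}$ is a partial GW*-algebra, and then invoke Takakura's result. However, two points deserve correction.

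First, your Step~1/Step~3 detour through the bounded Takesaki expectation $\E_0\colon (\M'_\w)'\to(\NG'_\w)'$ creates a genuine gap. In Step~3 you write ``for $A\in\M$, $\E_0(A)\in(\NG'_\w)'$'', but $A$ is unbounded and $\E_0$ is only defined on the von Neumann algebra $(\M'_\w)'$; there is no direct meaning for $\E_0(A)$, and extending $\E_0$ to affiliated unbounded operators is precisely the nontrivial step you are trying to carry out. The paper avoids this entirely: it defines the expectation \emph{directly} as the compression
\[
\E_\NG(A):=P_\NG A\restr{P_\NG\D},\qquad A\in\M,
\]
and then cites \cite[Theorem~5.3]{takakura} (whose hypotheses are exactly (N$_0$)--(N$_3$) together with the fact that $\NG_{P_\NG}$ is a partial GW*-algebra) to obtain properties (i)--(iii) and uniqueness. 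The role of (N$_3$) and Takesaki's theorem is absorbed inside Takakura's argument; you do not need to reproduce it at the unbounded level.

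Second, in Step~2 you derive $E'\in\NG'_\w$ from (N$_0$) alone. Density of $(\NG\cap R^\w\M)\xi_0$ in $\H_\NG$ gives that $\H_\NG$ is reducing for $\NG$, but to conclude $P_\NG\in\NG'_\w$ \emph{and} $P_\NG\D\subset\D$ one needs the essential self-adjointness hypothesis (N$_2$); the paper obtains this via \cite[Theorem~7.4.4]{ait_book}. Your parenthetical ``since $E'$ commutes with $\M'_\w$'' is also unjustified and, in any case, unnecessary for the argument.
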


\begin{proof} Since $\M$ is self-adjoint and $(\NG\cap R^\w\M)\xi_0$ is dense in $\H_\NG$, it follows that $(\NG\cap R^\w\M)\xi_0$ is a reducing subspace for $\NG$; that is
$$ \NG (\NG\cap R^\w\M)\xi_0\subset \overline{(\NG\cap R^\w\M)\xi_0}=\H_\NG,$$
which implies by (N$_2$) and \cite[Theorem 7.4.4]{ait_book} that $P_\NG \in \NG'_\w$ and $P_\NG \D \subset \D$, where $P_\NG$ is the projection of $\H$ onto $\H_\NG$. Moreover, from Theorem \ref{thm_4.2} it follows that $\NG_{P_\NG}$ is a partial GW*-algebra on $P_\NG \D$. Now, define $\E_\NG(A):= P_\NG A \restr{P_\NG D}$, $A\in \M$. Then by \cite[Theorem 5.3]{takakura} it follows that $\E_\NG$ is a conditional expectation of $(\M, \xi_0)$ with respect to $\NG$ and that it is unique. This concludes the proof.
\end{proof}

\begin{cor} Let $\M$, $\xi_0$, $\NG$ be as in Theorem \ref{thm_5.1}. Suppose that $\D=\D^\infty (T)$, where $T$ is a  self-adjoint operator of $\NG$ such that $T^n \restr \D^\infty (T) \in \NG$, for every $n \in {\mb N}$. Then  there exist a unique  conditional expectation $\E_\NG$ of $(\M, \xi_0)$ with respect to $\NG$.
\end{cor}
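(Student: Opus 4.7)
The plan is to check the one hypothesis of Theorem \ref{thm_5.1} that is not explicitly among the present assumptions, namely that $t_\NG$ is defined by a sequence of essentially self-adjoint operators of $\NG$ whose spectral projections leave $\D$ invariant; the rest is an immediate appeal to Theorem \ref{thm_5.1}. The natural candidate sequence is built directly from the powers of $T$.

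First I would set $T_n := (\Id+T^2)^n\restr\D^\infty(T)$ for $n\in\{0,1,2,\ldots\}$. Since each $T^{2k}\restr\D^\infty(T)$ lies in $\NG$ and $\NG$ is an O*-vector space, $T_n\in\NG$. Its closure is $(\Id+\overline{T}^2)^n$, which is self-adjoint by the functional calculus for $\overline{T}$, so $T_n$ is essentially self-adjoint; the spectral projections of $\overline{T}_n$ are Borel functions of $\overline{T}$ and therefore preserve $\D^\infty(T)=\D$; and
\[
\|T_n\xi\|^2=\int(1+\lambda^2)^{2n}\,d\ip{E_T(\lambda)\xi}{\xi}
\]
is non-decreasing in $n$, so the sequence is norm-increasing in the sense of Theorem \ref{thm_4.2}.

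Next I would verify that the topology $\tau$ generated by $\{\|T_n\cdot\|\}_{n\geq 0}$ equals $t_\NG$. The inclusion $\tau\subseteq t_\NG$ is immediate because each $T_n$ lies in $\NG$. For the reverse inclusion, note that $(\D,\tau)=(\D^\infty(T),\tau)$ is a Fr\'echet space. For any $Y\in\NG$, the map $Y:(\D,\tau)\to\H$ has closed graph: if $\xi_k\to\xi$ in $\tau$ (hence in $\H$) and $Y\xi_k\to\eta$ in $\H$, then for every $\phi\in\D$,
\[
\ip{\eta}{\phi}=\lim_k\ip{Y\xi_k}{\phi}=\lim_k\ip{\xi_k}{Y\ad\phi}=\ip{\xi}{Y\ad\phi}=\ip{Y\xi}{\phi},
\]
so $\eta=Y\xi$ by density of $\D$ in $\H$. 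The closed graph theorem for Fr\'echet spaces then supplies constants $C_Y$ and $n_Y$ with $\|Y\xi\|\leq C_Y\|T_{n_Y}\xi\|$ on $\D$, hence $t_\NG\subseteq\tau$.

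With this verification in hand, and since (N$_0$)--(N$_3$) are built into the hypotheses by assuming $\M,\xi_0,\NG$ are as in Theorem \ref{thm_5.1}, Theorem \ref{thm_5.1} applies to the sequence $\{T_n\}$ and produces the unique conditional expectation $\E_\NG$ of $(\M,\xi_0)$ with respect to $\NG$. The only nontrivial step in the plan is the closed graph argument of the third paragraph; it rests crucially on the hypothesis $\D=\D^\infty(T)$, which equips $\D$ with its canonical Fr\'echet structure generated by the powers of $T$ and so forces every element of $\NG$ to be dominated by some $\|T_{n_Y}\cdot\|$.
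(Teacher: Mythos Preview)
Your argument is correct and follows essentially the same route as the paper. The paper's proof is a one-line appeal to Corollary~\ref{cor_4.3} and Theorem~\ref{thm_5.1}: Corollary~\ref{cor_4.3} is precisely the packaged statement that on $\D^\infty(T)$ the powers of $T$ supply the sequence required by Theorem~\ref{thm_4.2} (and hence by Theorem~\ref{thm_5.1}), which is exactly what you spell out in detail with $T_n=(\Id+T^2)^n\restr\D^\infty(T)$ and the closed graph argument.
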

\begin{proof} This follows from Corollary \ref{cor_4.3} and Theorem \ref{thm_5.1}.
\end{proof}

\section*{Acknowledgements}
The authors wish to express their gratitude to the referee for pointing out a mistake in a earlier version of the proof of Theorem \ref{thm_4.2} and for his suggestions.

This work has been done during mutual visits of the authors to their respective home institutions:  F.B. and C.T. acknowledge
the warm hospitality of the Department of Applied Mathematics of the Fukuoka University as well as A.I. acknowledges the hospitality of the Dipartimento di Matematica
ed Applicazioni, Universit\`a di Palermo. We also acknowledge grants of
 the Japan Private School Promotion Foundation and  of CORI, Universit\`a di Palermo.


\begin{thebibliography}{99}

\bibitem{bit1} F. Bagarello, A. Inoue and C. Trapani,
{\em Bicommutants of reduced unbounded operator algebras},
Proc. Amer.  Math. Soc., {\bf 137} (2009) 3709-3716.

\bibitem{dixmier} J. Dixmier, {\em Von Neumann algebras}, North-Holland, Amsterdam, 1981.

\bibitem{schmu} K. Schm\"udgen, {\em Unbounded operator algebras and representation theory}, Birkh\"auser Verlag, Basel, 1990.

\bibitem{ait_book} J.-P. Antoine, A. Inoue, C. Trapani, {\em Partial *-algebras and their operator realizations},
Kluwer, Dordrecht, 2002.

\bibitem{takakura} M. Takakura, {\em Unbounded conditional expectations for partial O*-algebras}, Internationa J. Math. Math. Sci., 2009 (2009).







\end{thebibliography}
\end{document}